\DeclareMathOperator{\lineartakum}{\overline{\tau}}
\DeclareMathOperator{\integer}{uint}
\pgfplotsset{compat=1.17}
\definecolor{sign}{HTML}{b02a2d}
\definecolor{direction}{HTML}{007900}
\definecolor{regime}{HTML}{8c399e}
\definecolor{characteristic}{HTML}{1f5dc2}
\definecolor{mantissa}{HTML}{636363}
\definecolor{error}{HTML}{BD002A}
\definecolor{cellbg}{HTML}{EDEDED}
\definecolor{p-sign}{HTML}{FF5454}
\definecolor{p-regime}{HTML}{CC9966}
\definecolor{p-regime-term}{HTML}{996633}
\definecolor{p-exponent}{HTML}{0080FF}
\definecolor{p-fraction}{HTML}{000000}
\def\lst@makecaption{%
  \def\@captype{table}%
  \@makecaption
}
\begin{document}

\title{Integer Representations in IEEE 754, OFP8, Bfloat16, Posit, and Takum Arithmetics}
\titlerunning{Integer Representations in Arithmetics}
\author{Laslo Hunhold\,\orcidlink{0000-0001-8059-0298}}
\authorrunning{L. Hunhold}
\institute{%
	Parallel and Distributed Systems Group\\
	University of Cologne, Cologne, Germany\\
	\email{hunhold@uni-koeln.de}
}
\maketitle

\begin{abstract}
Although not primarily designed for this purpose, floating-point numbers are often used to represent integral values, with some applications explicitly relying on this capability. However, the integral representation properties of IEEE 754 floating-point numbers have not yet been formally investigated. Recently, the \texttt{bfloat16}, posit and takum machine number formats have been proposed as alternatives to IEEE 754, while OCP 8-bit floating point (OFP8) types (E4M3 and E5M2) have been introduced as 8-bit extensions of IEEE 754, albeit with slight deviations from the standard. It is therefore timely to evaluate IEEE 754 and to assess how effectively the new formats fulfil this function in comparison with the standard they aim to replace.
\par
This paper presents the first rigorous derivations and proofs of the integral representation capabilities of IEEE 754 floating-point numbers, OFP8, \texttt{bfloat16}, posits, and takums. We examine both the exact number of bits required to represent a given integer and the largest consecutive integer representable with a specified bit width.
The results show that OFP8 yields mixed outcomes, \texttt{bfloat16} generally underperforms, and posits fail to scale effectively, whereas takums consistently match or outperform the other formats, maintaining backward compatibility with IEEE 754.
\end{abstract}

\keywords{
	IEEE 754 \and
	OFP8 \and
	bfloat16 \and
	posit arithmetic \and
	takum arithmetic \and
	floating-point numbers \and
	largest consecutive integer
}

\section{Introduction}
Like using a wrench as hammer, floating-point numbers often find themselves used for the purpose of representing integers in various
applications. For instance, JavaScript uses double-precision floating-point numbers for all numerical representations, defining the second-largest consecutively representable integer as the constant \texttt{Number.MAX\_SAFE\_INTEGER} \cite[Sections~6.1.6.1 and 21.1.2.6]{example-javascript}. Similarly, the NIfTI image format, a standard in neuroimaging and MRI, utilises single-precision floating-point numbers to specify voxel offsets \cite{example-nifti1}. This characteristic also has implications for discrete Fourier transforms (representing discrete frequencies), statistical analyses (discrete distributions), deep learning (quantisation) and other contexts where integral and continuous quantities coexist.
\par
The integer representation limits of a given floating-point format are, therefore, of considerable interest. One metric used to evaluate this is the \emph{largest consecutive integer}, defined as the smallest exactly representable positive integer $m \in \mathbb{N}_0$ such that $m+1$ is no longer exactly representable. Notably, this usually results in $m+1$ rounding down to $m$, which poses challenges in scenarios where floating-point numbers are employed as indices, potentially causing loops to never terminate. This further underscores the importance of understanding the integral representation limits of a given format.
\par
With the advent of new machine number formats, it is important to assess their ability to represent integers and determine whether they can serve as drop-in replacements, offering at least the same representational power as IEEE 754 floating-point numbers. Besides three formats within the IEEE 754 framework (OFP8 (E4M3 and E5M2) and \texttt{bfloat16}), this paper examines the two tapered precision posit and linear takum formats, where the exponent is variable-length encoded. Consequently, the number of fraction bits varies depending on the value of the exponent, rendering the analysis significantly more complex than the straightforward proof for IEEE 754 floating-point numbers provided later. None of the five aformentioned formats have been formally analysed in this context;
the existing formulae for posits are empirically derived.
\par
This paper makes three primary contributions: (1) it formally derives the exact number of bits required to represent any integer in posit and takum arithmetic; (2) it formally derives the largest consecutive integer for IEEE 754, OFP8, \texttt{bfloat16}, posit and takum floating-point formats; and (3) it compares the integer representation capabilities of posits and takums with those of IEEE 754, OFP8, and \texttt{bfloat16} floating-point formats.
\par
The remainder of this paper is organised as follows: Sections~\ref{sec:posit_encoding} and \ref{sec:takum_encoding} define the posit and takum formats, respectively. Section~\ref{sec:integer_representations} presents the main results. Section~\ref{sec:evaluation} evaluates these results and compares the formats at various precisions, followed by the conclusion in Section~\ref{sec:conclusion}. To maintain conciseness and emphasise the results, formal proofs are provided at the end in Sections~\ref{sec:proof-posit-integer}, \ref{sec:proof-posit-consecutive_integers}, \ref{sec:proof-takum-integer}, and \ref{sec:proof-linear_takum-consecutive_integers}.
\section{Posit Encoding Scheme}\label{sec:posit_encoding}
The posit number format, introduced by Gustafson et al. in \cite{posits-beating_floating-point-2017}, has since been extensively studied as a potential replacement for the IEEE 754 standard, both for its numerical properties \cite{posit-dnn-2019, posits-good-bad-ugly-2019} and its applicability to hardware implementations \cite{2024-log-posit, percival-64_bit-2024}. These investigations have culminated in the development of an initial standardisation effort \cite{posits-standard-2022}. The key design feature of posits is their variable-length exponent coding, which allocates additional fraction bits to values close to 1 compared to IEEE 754 floating-point numbers. This improvement is achieved at the expense of reducing the fraction bits for numbers further from 1, particularly those approaching zero or infinity. The format is formally defined as follows:
\begin{definition}[posit encoding\cite{posits-beating_floating-point-2017, posits-standard-2022}]\label{def:posit}
Let $n \in \mathbb{N}$ with $n \ge 5$. Any
$n$-bit MSB$\rightarrow$LSB string $P := (\textcolor{p-sign}{S},\textcolor{p-regime}{R},\textcolor{p-regime-term}{\overline{R_0}},
\textcolor{p-exponent}{E},\textcolor{p-fraction}{F}) \in {\{0,1\}}^n$ of the form
\begin{center}
	\begin{tikzpicture}
		\draw[<->] (0.0, 0.7) -- (0.4, 0.7) node[above,pos=.5] {sign};
		\draw[<->] (0.4, 0.7) -- (4.5, 0.7) node[above,pos=.5] {exponent};
		\draw[<->] (4.5, 0.7) -- (8.0, 0.7) node[above,pos=.5] {fraction};

		\draw (0,  0  ) rectangle (0.4,0.5) node[pos=.5] {\textcolor{p-sign}{$S$}};
		\draw (0.4,0  ) rectangle (3.2,0.5) node[pos=.5] {\textcolor{p-regime}{$R$}};
		\draw (3.2,0  ) rectangle (3.7,0.5) node[pos=.5] {\textcolor{p-regime-term}{$\overline{R_0}$}};
		\draw (3.7,0  ) rectangle (4.5,0.5) node[pos=.5] {\textcolor{p-exponent}{$E$}};
		\draw (4.5,0  ) rectangle (8.0,0.5) node[pos=.5] {$F$};

		\draw[<->] (0.0, -0.2) -- (0.4, -0.2) node[below,pos=.5] {$1$};
		\draw[<->] (0.4, -0.2) -- (3.2, -0.2) node[below,pos=.5] {$k$};
		\draw[<->] (3.2, -0.2) -- (3.7, -0.2) node[below,pos=.5] {$1$};
		\draw[<->] (3.7, -0.2) -- (4.5, -0.2) node[below,pos=.5] {$2$};
		\draw[<->] (4.5, -0.2) -- (8.0, -0.2) node[below,pos=.5] {$p$};
	\end{tikzpicture}
\end{center}
with sign bit $\textcolor{p-sign}{S}$, regime bits
$\textcolor{p-regime}{R} := (\textcolor{p-regime}{R}_{k-1},\dots,\textcolor{p-regime}{R}_0)$,
regime termination bit $\textcolor{p-regime-term}{\overline{R_0}}$, regime
\begin{equation}
	r := \begin{cases}
		-k & \textcolor{p-regime}{R}_0 = 0\\
		k-1 & \textcolor{p-regime}{R}_0 = 1,
	\end{cases}
\end{equation}
exponent bits $\textcolor{p-exponent}{E} :=(\textcolor{p-exponent}{E}_{1},
\textcolor{p-exponent}{E}_0)$, exponent $\hat{e} := 2 \textcolor{p-exponent}{E}_1 +
\textcolor{p-exponent}{E}_0$, fraction bit count
$p := n - k - 4 \in \{ 0,\dots,n-5 \}$, fraction
$f := 2^{-p} \sum_{i=0}^{p-1} \textcolor{p-fraction}{F\!}_i 2^i \in [0,1)$ and
\enquote{actual} exponent
\begin{equation}
	e := {(-1)}^{\textcolor{p-sign}{S}} (4 r + \hat{e} + \textcolor{p-sign}{S})
\end{equation}
encodes the posit value
\begin{equation}\label{eq:posit-value}
	\pi(P) \!:=\! \begin{cases}
		\begin{cases}
			0 & \textcolor{p-sign}{S} = 0\\
			\mathrm{NaR} & \textcolor{p-sign}{S} = 1
		\end{cases}
			& \textcolor{p-regime}{R} = \textcolor{p-regime-term}{\overline{R_0}} =
				\textcolor{p-exponent}{E} = \textcolor{p-fraction}{F} = \bm{0} \\
		[(1-3 \textcolor{p-sign}{S}) + f] \cdot
		2^e & \text{otherwise}.
	\end{cases}
\end{equation}
with $\pi \colon {\{0,1\}}^n \mapsto \{ 0,\mathrm{NaR} \} \cup
\pm\left[2^{-4n+8},2^{4n-8}\right]$.
The symbol $\mathrm{NaR}$ (\enquote{not a real})
represents infinity and other non-representable forms. 
Without loss of generality, any bit string shorter than 5 bits is also
included in the definition by assuming the missing bits to be
zero bits (\enquote{ghost bits}). The colour scheme for the different bit
string segments was adopted from the standard \cite{posits-standard-2022}.
\end{definition}
\section{Takum Encoding Scheme}\label{sec:takum_encoding}
\begin{figure}[tb]
	\begin{center}
		\begin{tikzpicture}
			\begin{axis}[
				scale only axis,
				width=0.8\textwidth,
				height=0.4944\textwidth,
				xlabel={exponent},
				ylabel={-(non-fraction length)/bits},
				ymin=18,
				ymax=31.5,
				xmin=-255,
				xmax=254,
				grid=major,
				ytick={15,20,25,30},
				yticklabels={$-20$,$-15$,$-10$,$-5$},
				grid style={line width=.1pt, draw=gray!10},
				legend style={nodes={scale=0.7, transform shape}},
				legend style={at={(0.03,0.95)},anchor=north west},
			]
				\addplot[const plot,characteristic,very thick] 
				table [x=c, y=takum, col 
				sep=comma]{code/precision.csv};
				\addlegendentry{Takum};
				\addplot[const plot,sign,thick] table [x=c, 
				y=posit, col sep=comma]{code/precision.csv};
				\addlegendentry{Posit};

				\addplot[const plot,mantissa,densely 
				dashed,thick] table [x=c, y=e4m3, col 
				sep=comma]{code/precision.csv};
				\addlegendentry{OFP8 E4M3};

				\addplot[const plot,mantissa,densely 
				dashdotted,thick] table [x=c, y=e5m2, col 
				sep=comma]{code/precision.csv};
				\addlegendentry{OFP8 E5M2};

				\addplot[const plot,direction,densely 
				dashdotdotted,thick] table [x=c, y=float32, col 
				sep=comma]{code/precision.csv};
				\addlegendentry{\texttt{bfloat16}/\texttt{float32}};
				\addplot[const plot,mantissa,densely 
				dotted,thick] table [x=c, y=float64, col 
				sep=comma]{code/precision.csv};
				\addlegendentry{\texttt{float64}};
			\end{axis}
		\end{tikzpicture}
	\end{center}
	\caption{
		Non-fraction length (the number of bits used to encode the sign and exponent, including those fraction bits reassigned in the case of subnormal numbers) as a function of the coded exponent for IEEE 754, OFP8, \texttt{bfloat16}, takum, and posit floating-point formats. The plot is inverted, so larger values correspond to more available fraction bits.
	}
	\label{fig:precision}
\end{figure}
A primary criticism of the posit format is its limited dynamic range and the sharp decline in precision for numbers farther from 1, which arises from the rapidly increasing length of the encoded exponent \cite{posits-good-bad-ugly-2019,2024-takum}. In response, the takum number format has been recently proposed, featuring an alternative exponent coding scheme. This approach trades some density near 1, compared to posits, to achieve greater precision for values farther from 1 (see Figure~\ref{fig:precision}). Specifically, this paper focuses on linear takums, a variant that adopts a floating-point representation, as opposed to takums, a logarithmic number system. The linear takum encoding is formally defined as follows:
\begin{definition}[linear takum encoding {\cite[Definition~8]{2024-takum}}]\label{def:linear_takum}
	Let $n \in \mathbb{N}$ with $n \ge 12$. Any $n$-bit MSB$\rightarrow$LSB 
	string 
	$T := (\textcolor{sign}{S},\textcolor{direction}{D},\textcolor{regime}{R},
	\textcolor{characteristic}{C},\textcolor{mantissa}{F}) \in {\{0,1\}}^n$ of 
	the form
	\begin{center}
		\begin{tikzpicture}
			\draw[<->] (0.0, 0.7) -- (0.4, 0.7) node[above,pos=.5] {sign};
			\draw[<->] (0.4, 0.7) -- (4.5, 0.7) node[above,pos=.5] 
			{characteristic};
			\draw[<->] (4.5, 0.7) -- (8.0, 0.7) node[above,pos=.5] {fraction};
			
			\draw (0,  0  ) rectangle (0.4,0.5) node[pos=.5] 
			{\textcolor{sign}{S}};
			\draw (0.4,0  ) rectangle (0.8,0.5) node[pos=.5] 
			{\textcolor{direction}{D}};
			\draw (0.8,0  ) rectangle (2.0,0.5) node[pos=.5] 
			{\textcolor{regime}{R}};
			\draw (2.0,0  ) rectangle (4.5,0.5) node[pos=.5] 
			{\textcolor{characteristic}{C}};
			\draw (4.5,0  ) rectangle (8.0,0.5) node[pos=.5] 
			{\textcolor{mantissa}{F}};
			
			\draw[<->] (0.0, -0.2) -- (0.4, -0.2) node[below,pos=.5] {$1$};
			\draw[<->] (0.4, -0.2) -- (0.8, -0.2) node[below,pos=.5] {$1$};
			\draw[<->] (0.8, -0.2) -- (2.0, -0.2) node[below,pos=.5] {$3$};
			\draw[<->] (2.0, -0.2) -- (4.5, -0.2) node[below,pos=.5] {$r$};
			\draw[<->] (4.5, -0.2) -- (8.0, -0.2) node[below,pos=.5] {$p$};
		\end{tikzpicture}
	\end{center}
	with {sign bit} $\textcolor{sign}{S}$, {direction bit}
	$\textcolor{direction}{D}$, {regime bits}
	$\textcolor{regime}{R} := (\textcolor{regime}{R}_2,
	\textcolor{regime}{R}_1,\textcolor{regime}{R}_0)$, characteristic 
	bits $\textcolor{characteristic}{C} :=(\textcolor{characteristic}{C}_{r-1},\dots,
	\textcolor{characteristic}{C}_0)$, {fraction bits}
	$\textcolor{mantissa}{F} := (\textcolor{mantissa}{F}_{p-1},\dots,
	\textcolor{mantissa}{F}_0)$, {regime}
	\begin{equation}
		r := \begin{cases}
			\integer(\overline{\textcolor{regime}{R}}) & 
			\textcolor{direction}{D} = 0\\
			\integer(\textcolor{regime}{R})	&
			\textcolor{direction}{D} = 1
		\end{cases}
		\in \{0,\dots,7\},
	\end{equation}
	{characteristic}
	\begin{equation}\label{eq:linear_takum-characteristic}
		c :=
		\begin{cases}
			-2^{r+1} + 1 + \integer(\textcolor{characteristic}{C}) & 
			\textcolor{direction}{D} = 
			0\\
			2^r - 1 + \integer(\textcolor{characteristic}{C})
			& \textcolor{direction}{D} = 1
		\end{cases}
		\in \{ -255,\dots,254 \},
	\end{equation}
	{fraction bit count} $p := n - r -5 \in \{n-12,\dots,n-5\}$,
	{fraction} $f := 2^{-p} \integer(\textcolor{mantissa}{F})
	\in [0,1)$ and {exponent}
	\begin{equation}\label{eq:linear_takum-exponent}
		e := {(-1)}^{\textcolor{sign}{S}}(c + \textcolor{sign}{S}) \in \{-255,254\}
	\end{equation}
	encodes the linear takum value
	\begin{equation}\label{eq:linear_takum-value}
		\lineartakum(T)
		:= \begin{cases}
			\begin{cases}
				0 & \textcolor{sign}{S} = 0\\
				\mathrm{NaR} & \textcolor{sign}{S} = 1
			\end{cases}
			& \textcolor{direction}{D} = \textcolor{regime}{R} = 
			\textcolor{characteristic}{C} = \textcolor{mantissa}{F} 
			= \bm{0} \\
			\left[ (1-3\textcolor{sign}{S}) + f \right] \cdot 2^e & \text{otherwise}
		\end{cases}
	\end{equation}
	with $\lineartakum \colon {\{0,1\}}^n \mapsto \{ 0,\mathrm{NaR} \} \cup
	\pm\left(2^{-255},2^{255}\right)$. The symbol $\mathrm{NaR}$ (\enquote{not a real})
	represents infinity and other non-representable forms. 
	Without loss of generality, any bit string shorter than 12 bits is also
	considered in the definition by assuming the missing bits to be
	zero bits (\enquote{ghost bits}). The colour scheme for the different bit
	string segments was adopted from the specification \cite{2024-takum}.
\end{definition}
Takums and posits are defined for any $n \in \mathbb{N}_1$, offering a degree of flexibility not present in IEEE 754 floating-point numbers, where $n$ is restricted to $\{16, 32, 64, \dots\}$.
This also stands in contrast to OFP8 formats,
which are fixed at $n=8$.
As follows we will refer to linear takums as takums.
\section{Integer Representations}\label{sec:integer_representations}
The analysis of integer representations in floating-point formats is governed by a straightforward principle: the integer bits are encoded within the fraction, and the exponent is adjusted to shift the decimal point sufficiently far so that all fraction bits lie to the left of it.
\par
A key challenge arises from the role of implicit zero bits to the right of the explicit fraction bits, which may form part of the integer's binary representation. Another, more nuanced challenge pertains to the analysis of tapered-precision floating-point formats. Unlike fixed-precision formats such as IEEE 754, tapered-precision formats feature a variable number of fraction bits depending on the exponent value (see Figure~\ref{fig:precision}). This variability introduces an intricate balance between exponent value and number of fraction bits that is absent in traditional fixed-precision floating-point formats.
\subsection{IEEE 754 and Bfloat16}
We begin by examining the IEEE 754 standard and the compatible \texttt{bfloat16} format. Although the following result may appear elementary, it does not seem to have been explicitly documented in the existing literature. To ensure completeness, we present the result along with a proof, providing a comprehensive foundation for this paper. However, for the sake of simplicity, we refrain from providing a fully formalised proof, as this would necessitate a more extensive introduction to the IEEE 754 floating-point representation.
\begin{proposition}[Consecutive IEEE 754 Integers]\label{prop:ieee_754-consecutive_integers}
	Let an IEEE 754 floating-point format with $n_e$ (explicit)
	exponent and $n_f$ fraction bits, and $n_e$ sufficiently large for
	the exponent to assume the value $n_f + 1$. All
	$m \in \mathbb{Z}$ with
	\begin{equation}\label{eq:ieee_754-consecutive_bound}
		|m| \le 2^{n_f + 1}
	\end{equation}
	are exactly representable in the format.
\end{proposition}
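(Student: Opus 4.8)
The plan is to reduce everything to the case of a single positive integer and then to exhibit an explicit normal-number encoding. Since the sign bit contributes only the factor $(-1)^s$ and $0$ is exactly representable by the dedicated zero encoding, it suffices to prove that every integer $m$ with $1 \le m \le 2^{n_f+1}$ is exactly representable as a normal IEEE 754 (or \texttt{bfloat16}) value with non-negative exponent; in particular subnormals will not be needed at all.

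First I would write $m$ in binary. Set $k := \lfloor \log_2 m \rfloor$, so that $2^k \le m < 2^{k+1}$ and $k \ge 0$. Then $m = 2^k \cdot (1 + f)$ with $f := (m - 2^k)\,2^{-k} \in [0,1)$, and $f$ has a terminating binary expansion $f = \sum_{i=1}^{k} c_i 2^{-i}$ with $c_i \in \{0,1\}$ of length at most $k$. This is exactly the shape $(1.c_1 c_2 \dots)\cdot 2^{e}$ of a normal floating-point value with actual exponent $e = k$, so $m$ is representable provided (i) the $k$ fraction bits $c_1,\dots,c_k$ fit into the $n_f$ available fraction bits, and (ii) the exponent $k$ lies in the representable range.

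For (i), the bound $m \le 2^{n_f+1}$ gives $\log_2 m \le n_f + 1$. If $m < 2^{n_f+1}$ then $k \le n_f$, so the expansion of $f$ occupies only bit positions $1,\dots,k$ and the remaining positions $k+1,\dots,n_f$ are set to zero; if $m = 2^{n_f+1}$ then $f = 0$ and $k = n_f+1$, with an all-zero fraction. This boundary value is the only one requiring the exponent to reach $n_f+1$, which is precisely what the hypothesis on $n_e$ guarantees (and which holds automatically for all standard formats, e.g. $E_{\max} = 127 \ge 24$ for single precision). For (ii), the maximum normal exponent is at least $n_f+1$ under this hypothesis and the minimum normal exponent is negative, so $0 \le k \le n_f+1$ is always in range. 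Assembling the sign bit, the biased exponent encoding $e = k$, and the zero-padded fraction bits $c_1,\dots,c_k$ then yields an exact encoding of $m$, and hence of $-m$ and of $0$.

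I expect the only delicate point to be the bookkeeping around the boundary value $m = 2^{n_f+1}$ together with stating the exponent-range condition cleanly without unfolding the full IEEE 754 bias machinery; everything else rests on the elementary observation that a positive integer below $2^{n_f+1}$ has at most $n_f$ bits after its leading one. For completeness one may remark that the bound is sharp — $2^{n_f+1}+1$ is odd and would require an $(n_f+1)$-st fraction bit — though this is not needed for the statement as phrased.
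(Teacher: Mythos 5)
Your proof is correct and follows essentially the same approach as the paper: every positive integer up to $2^{n_f+1}$ is written as a normal value $(1+f)\cdot 2^{k}$ whose fraction fits in $n_f$ bits, with the boundary case $m = 2^{n_f+1}$ handled by the exponent value $n_f+1$ guaranteed by the hypothesis. Your version is somewhat more explicit in constructing the encoding for an arbitrary $m$, where the paper argues from the saturated representation $2^{n_f+1}-1$ downward, but the underlying observation is identical.
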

\begin{proof}
	Consider a number in the specified format where all the fraction bits are set to one and the exponent value is $n_f$. In this configuration, the number represents the largest explicit integer that can be formed within this format, which is $2^{n_f+1} - 1$. By observing the structure of the representation, it follows that all preceding positive integers up to $2^{n_f+1} - 1$ can also be exactly represented.
	\par
	Next, consider $2^{n_f+1}$. This value is exactly representable with an exponent value of $n_f+1$ (as required to be achievable) and all fraction bits set to zero. However, note that the least significant zero bit of $2^{n_f+1}$ is implicit due to the encoding scheme. This implicitness has a key implication: the number $2^{n_f+1} + 1$, which would require toggling that implicit bit to one, cannot be exactly represented within the given format.
	Thus, $2^{n_f+1} + 1$ is not exactly representable.
	\qed
\end{proof}
It shall be noted that the condition on $n_e$ being sufficiently
large is met by all standard formats and \texttt{bfloat16} but included for completeness.
This result will serve as the reference point for the subsequent analysis of posits and takums.
\subsection{OFP8}
Despite certain deviations from the IEEE 754 standard, the results derived in Proposition~\ref{prop:ieee_754-consecutive_integers} remain applicable, since both IEEE 754 and OFP8 define normal numbers in the same way, and all integral values representable in these floating-point formats are normal. We therefore apply the proposition to both OFP8 types.  
\par  
For the OFP8 E4M3 type, which has four exponent bits and three fraction bits, the biased exponent ranges from $-6$ to $8$ \cite{ofp8}. Hence, $n_e = 3$ and $n_f = 4$. Since $n_f + 1 = 5$ lies within the exponent range, Proposition~\ref{prop:ieee_754-consecutive_integers} implies that all $m \in \mathbb{Z}$ with $|m| \leq 2^{n_f+1} = 2^5 = 32$ are exactly representable in E4M3.  
Similarly, the OFP8 E5M2 type has five exponent bits and two fraction bits, giving a biased exponent range from $-14$ to $15$ \cite{ofp8}. Thus, $n_e = 5$ and $n_f = 2$. Since $n_f + 1 = 3$ also lies within the exponent range, it follows that all $m \in \mathbb{Z}$ with $|m| \leq 2^{n_f+1} = 2^3 = 8$ are exactly representable in E5M2.
\subsection{Posit Arithmetic}
In the case of tapered-precision formats such as posits, and subsequently takums, we adopt a proof strategy that proceeds in two stages. First, we establish how many bits are required to precisely represent a given integer, a result that also holds intrinsic value and utility. In the second stage, we leverage this result to determine the largest consecutive integer that can be represented for a fixed $n$. We begin with the following result:
\begin{proposition}[Posit Integer Representation]\label{prop:posit-integer}
	Let $m \in \mathbb{Z} \setminus \{0\}$ with
	$v := 1 + \lfloor\log_2(|m|)\rfloor$ bits and
	$w := \max_{i \in \mathbb{N}_0} \left(2^i \mid m\right)$
	trailing zeros in $|m|$'s binary representation.
	There exists an $M \in {\{0,1\}}^\ell$ with $\pi(M)=m$ and
	\begin{equation}\label{eq:prop-posit-integers-ell}
		\ell := {\left\lfloor \frac{5 (v+3)}{4} - w \right\rfloor} -
				(w = v-1) \cdot \begin{cases}
					3 & v \in 4\mathbb{N}_0+1\\
					1 & v \in 4\mathbb{N}_0+3,
				\end{cases}
	\end{equation}
	which is the shortest possible representation.
\end{proposition}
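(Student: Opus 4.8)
The plan is to reduce to the case $m>0$ using the negation symmetry of the posit format, then show that the encoding of $m$ is forced up to trailing zero bits, and finally read off the shortest length by counting which of those bits may be realised by ghost bits. The reduction is immediate: negating a posit amounts to taking the two's complement of its bit string, an operation that commutes with appending trailing zeros, so a length-$\ell$ string $M$ satisfies $\pi(M)=m$ exactly when the two's complement of $M$ satisfies $\pi(\cdot)=-m$; hence the shortest length depends only on $|m|$, and we may assume $m>0$, which forces the sign bit $S=0$.

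Next I would pin down the remaining fields. Writing $|m|$ in binary with most significant bit at position $v-1$ and least significant one at position $w$, the identity $\pi(M)=(1+f)2^{e}=m$ with $1+f\in[1,2)$ forces $e=v-1\ge 0$; then $f=m\,2^{1-v}-1$ is determined, and the decomposition $e=4r+\hat e$ with $\hat e\in\{0,1,2,3\}$ forces $r=\lfloor (v-1)/4\rfloor\ge 0$ and $\hat e=(v-1)\bmod 4$. Since $r\ge 0$, the regime is a run of $r+1$ ones closed by a $0$ bit, and the fraction must reproduce the $v-1-w$ binary digits of $f$. The only remaining freedom is the number $p\ge v-1-w$ of explicit fraction bits and how many trailing zeros of the full encoding are actually written down, since by the ghost-bit convention implicit trailing zeros may be appended as needed in order to decode. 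The bits that must appear explicitly are the sign bit, all $r+1$ regime ones (an implicit $0$ there would terminate the run early and change $r$), and, when $w<v-1$, every fraction bit down to the last one $b_w=1$.

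The final step is the length count, split according to whether $|m|$ is a power of two. If $w<v-1$, then $b_w=1$ is the last explicit bit, nothing can be trimmed, and the minimum length is $1+(r+1)+1+2+(v-1-w)=\lfloor (v-1)/4\rfloor+v+4-w$; the elementary identity $\lfloor (v-1)/4\rfloor+v+4=\lfloor 5(v+3)/4\rfloor$ (verified by writing $v-1\equiv j\pmod 4$, $j\in\{0,1,2,3\}$), together with $w\in\mathbb{N}_0$, turns this into $\lfloor 5(v+3)/4-w\rfloor$, and the correction term vanishes since $w\ne v-1$. If $w=v-1$, then $f=0$ and the fraction is empty, so the full encoding ends in the three bits (regime termination, $E_1$, $E_0$), which equal $000$, $001$, $010$, $011$ according to whether $\hat e$ equals $0,1,2,3$, i.e.\ whether $v\equiv 1,2,3,0\pmod 4$; the number of trailing zeros that can be absorbed into ghost bits is then $3,0,1,0$ respectively, which is exactly the correction subtracted in~\eqref{eq:prop-posit-integers-ell}. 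In each subcase I would confirm minimality by checking that removing one further bit changes the decoded $r$ or $\hat e$, and hence the value.

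The step I expect to be the main obstacle is not the arithmetic but the disciplined handling of the ghost-bit convention in the length count: one must verify at the same time that the trimmed strings still decode to $m$ — with the implicit zeros correctly supplying the termination bit, the absent exponent bits, and $p=0$ — and that no strictly shorter string decodes to $m$, for which it is essential that the forced value $e=v-1$ pins down the length of the explicit regime run.
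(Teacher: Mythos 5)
Your proposal is correct and follows essentially the same route as the paper's proof: reduce to $m>0$ by negation closure, force $e=v-1$ and hence $r=\lfloor(v-1)/4\rfloor$, $\hat e=(v-1)\bmod 4$, count $1+(r+1)+1+2+(v-1-w)=\lfloor 5(v+3)/4-w\rfloor$ explicit bits, and then trim trailing ghost zeros from $(\overline{R_0},E_1,E_0)$ only in the power-of-two case, yielding the corrections $3$ and $1$ for $v\equiv 1$ and $v\equiv 3\pmod 4$. Your explicit minimality check (that dropping a further bit would alter $r$ or $\hat e$) is if anything slightly more careful than the paper's one-line justification.
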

\begin{proof}
	See Section~\ref{sec:proof-posit-integer}.
	\qed
\end{proof}
As can be observed, $\ell$ is not merely a bound on the length but represents the exact length. Although this results in a complex expression due to the special cases involving zero-bit truncation, this complexity is essential for establishing the proof of the following proposition:
\begin{proposition}[Consecutive Posit Integers]\label{prop:posit-consecutive_integers}
	Let $n \in \mathbb{N}_3$. It holds for all $m \in \mathbb{Z}$ with
	\begin{equation}\label{eq:posit-consecutive_bound}
		|m| \le 2^{\left\lfloor \frac{4(n-3)}{5} \right\rfloor}
	\end{equation}
	that there exists an $M \in {\{0,1\}}^n$ with $m = \pi(M)$.
\end{proposition}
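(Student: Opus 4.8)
The plan is to reduce everything to Proposition~\ref{prop:posit-integer}, which supplies the \emph{exact} shortest posit length $\ell$ of any nonzero integer. The starting observation is that it suffices to prove $\ell(m)\le n$ for every integer $m$ with $1\le|m|\le 2^{b}$, where $b:=\lfloor 4(n-3)/5\rfloor$: from a representing string of length $\ell\le n$ one obtains an $n$-bit posit for $m$ by appending $n-\ell$ trailing zero bits, which only lengthens the zero-padded fraction and hence leaves the encoded value unchanged, while $m=0$ is immediate. I would also rewrite the defining inequality $b\le 4(n-3)/5$ once and for all as $5(b+3)\le 4n+3$, since this is what every subsequent floor estimate rests on.

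The argument then splits according to whether $|m|$ lies strictly below or exactly at the bound. If $1\le|m|\le 2^{b}-1$ then $v:=1+\lfloor\log_2|m|\rfloor\le b$. Dropping the non-positive term $-w$ and the nonnegative zero-truncation correction in \eqref{eq:prop-posit-integers-ell} gives $\ell(m)\le\lfloor 5(v+3)/4\rfloor$, and since this quantity is non-decreasing in $v$ it is at most $\lfloor 5(b+3)/4\rfloor\le\lfloor(4n+3)/4\rfloor=n$ by the rewritten inequality. This disposes of every non-extremal case.

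The remaining case is $|m|=2^{b}$, that is, $m=\pm 2^{b}$ --- the only element of the range with $v=b+1$, and the one for which $w=b=v-1$, so that the special correction in \eqref{eq:prop-posit-integers-ell} is active. Substituting $v=b+1$ and $w=b$ collapses $\lfloor 5(v+3)/4-w\rfloor$ to $\lfloor(b+20)/4\rfloor$; discarding the (nonnegative) correction and using $b\le 4(n-3)/5$ yields $\ell(m)\le\lfloor(b+20)/4\rfloor\le\lfloor(n+22)/5\rfloor$, which is $\le n$ for all $n\ge 5$. For the two leftover widths $n\in\{3,4\}$ one has $b=0$, hence $m=\pm 1$, and here the $v\in 4\mathbb{N}_0+1$ branch of the correction gives $\ell(\pm 1)=5-3=2\le n$ directly.

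The delicate point --- and the reason Proposition~\ref{prop:posit-integer} was phrased as an exact length rather than an inequality --- is that the non-extremal bound $\lfloor 5(b+3)/4\rfloor$ meets $n$ without slack for some residues of $n$, so the extremal case $|m|=2^{b}$ must be shown \emph{not} to exceed $n$ even though its $v$ is one larger; this works only because the binary expansion of $2^{b}$ ends in a long run of implicit zeros, which the correction term in \eqref{eq:prop-posit-integers-ell} is designed to reclaim. Verifying that the correction's case distinction on $v\bmod 4$ exactly matches the boundary configurations, and that the crude estimates above also cover the handful of small widths, is the main bookkeeping obstacle; the rest is routine floor-function manipulation.
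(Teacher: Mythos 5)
Your proposal is correct and follows essentially the same route as the paper: both reduce to the exact length formula of Proposition~\ref{prop:posit-integer}, dispose of the generic case $v\le\lfloor 4(n-3)/5\rfloor$ by dropping the $-w$ and correction terms, and handle the boundary value $\pm 2^{\lfloor 4(n-3)/5\rfloor}$ separately using the trailing-zero term $-w$ (with small $n$ checked by hand). The only substantive difference is that the paper's proof additionally shows $2^{\lfloor 4(n-3)/5\rfloor}+1$ is \emph{not} representable, establishing tightness, which your argument omits but which is not part of the stated claim.
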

\begin{proof}
	See Section~\ref{sec:proof-posit-consecutive_integers}.
	\qed
\end{proof}
Although the complete evaluation is deferred to Section~\ref{sec:evaluation}, it can already be noted that both IEEE 754 floating-point numbers and posits share the property that the largest consecutive posit integer is a power of two.
\subsection{Takum Arithmetic}
Following an approach analogous to that used for posits, we first establish the number of bits required to represent a given integer as a takum.
\begin{proposition}[Takum Integer Representation]\label{prop:takum-integer}
	Let $m \in \mathbb{Z} \setminus \{0\}$ with
	$|m| \le 2^{254}$,
	$v := 1 + \lfloor\log_2(|m|)\rfloor$ bits and
	$w := \max_{i \in \mathbb{N}_0} \left(2^i \mid m\right)$
	trailing zeros in $|m|$'s binary representation.
	There exists an $M \in {\{0,1\}}^\ell$ with $\lineartakum(M)=m$ and
	\begin{align}\label{eq:prop-takum-integers-ell}
		\ell &:= {\left\lfloor 4+v+\log_2(v) - w \right\rfloor} -\notag\\
			&\hphantom{\Leftrightarrow}\quad (w = v-1) \cdot \max_{i \in \{ 0,\dots, \lfloor \log_2(v) \rfloor\}}
				\left( 2^i \mid v - 2^{\lfloor \log_2(v) \rfloor} \right) -\notag\\
			&\hphantom{\Leftrightarrow}\quad (v \in 2^{\mathbb{N}_0}) \cdot \max_{i \in \{0,\dots,3\}}
				\left( 2^i \mid \lfloor \log_2(v) \rfloor \right),
	\end{align}
	which is the shortest possible representation.
\end{proposition}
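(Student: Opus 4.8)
The plan is to reduce the statement to an explicit, essentially unique description of the canonical bit string encoding $m$, and then count its deletable trailing zeros. First I would observe that, since $m\neq0$, every $M$ with $\lineartakum(M)=m$ falls into the non-special branch of \eqref{eq:linear_takum-value}, so $\lineartakum(M)=\left[(1-3S)+f\right]2^{e}$ with $f\in[0,1)$; the sign of $m$ fixes $S$, after which $e$ and $f$ are uniquely determined. For $m>0$ one obtains $e=v-1$ and $1+f=|m|\,2^{-(v-1)}$; for $m<0$ one obtains $e=v-1$, $f=2-|m|\,2^{-(v-1)}$ when $|m|$ is not a power of two, and $e=v-2$, $f=0$ when $|m|=2^{v-1}$. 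Writing $|m|=2^{w}q$ with $q$ odd, in each case the binary expansion of $f$ terminates after exactly $v-1-w$ places and ends in a $1$ unless $f=0$, which happens precisely when $m$ is a power of two, i.e.\ $w=v-1$. Hence the fraction field needs exactly $v-1-w$ bits, its least significant one being $1$ whenever $m$ is not a power of two.

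Next I would pin down the regime and characteristic. By \eqref{eq:linear_takum-exponent} we have $c=e$ for $S=0$ and $c=-(e+1)$ for $S=1$, so $c$ is fixed; the constraint $\integer(C)\in\{0,\dots,2^{r}-1\}$ in \eqref{eq:linear_takum-characteristic} then forces a unique triple $(D,r,\integer(C))$. For $m>0$, $c=v-1\geq0$, so $D=1$, $r=\lfloor\log_2 v\rfloor$ (the only $r$ with $2^{r}-1\leq c\leq2^{r+1}-2$) and $\integer(C)=v-2^{r}$. For $m<0$ not a power of two, $c=-v<0$, so $D=0$, $r=\lfloor\log_2 v\rfloor$ and $\integer(C)=2^{r+1}-1-v$. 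For $m=-2^{v-1}$ with $|m|\geq2$, $c=1-v$, so $D=0$, $r$ is the unique $r$ with $2^{r}+1\leq v\leq2^{r+1}$ and $\integer(C)=2^{r+1}-v$; the boundary case $m=-1$ instead needs $D=1$, $r=0$. With $p=n-r-5$ from Definition~\ref{def:linear_takum}, this produces a canonical string of length $5+r+p$, and since enlarging the fraction field only appends value-preserving zero bits, every bit string encoding $m$ equals this canonical string followed by trailing zeros. The shortest such string is therefore obtained by deleting all of its trailing zeros, which also settles existence.

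It remains to count those zeros. If $m$ is not a power of two, the fraction ends in a $1$ and nothing is deleted, so $\ell=5+\lfloor\log_2 v\rfloor+(v-1-w)=\lfloor4+v+\log_2 v-w\rfloor$, the leading term of \eqref{eq:prop-takum-integers-ell}. If $m$ is a power of two ($w=v-1$, empty fraction), the deletion consumes the trailing zeros of the characteristic field, whose number equals the $2$-adic valuation of $\integer(C)$; using the fact that $2^{r}-d$ and $d$ have the same number of trailing zeros for $0<d<2^{r}$, this equals the largest $i\leq\lfloor\log_2 v\rfloor$ with $2^{i}\mid v-2^{\lfloor\log_2 v\rfloor}$ in both the $S=0$ and $S=1$ cases, giving the second term. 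If moreover $\integer(C)=0$, equivalently $v\in2^{\mathbb{N}_0}$, the whole characteristic field vanishes and deletion continues into the $3$-bit regime field, which stores the binary digits of $r$ when $D=1$ and their bitwise complement when $D=0$; this contributes the $2$-adic valuation of $r$ capped at $3$ further deletions when $D=1$, and the number of trailing ones of $r$ when $D=0$. Since $r=\lfloor\log_2 v\rfloor$ in the $D=1$ subcases and $r=\lfloor\log_2 v\rfloor-1$ in the $D=0$ subcase, the elementary identity that the number of trailing ones of $k-1$ equals the $2$-adic valuation of $k$ reconciles the two signs; combined with the compensating $+1$ offsets that appear, for $D=0$, between the true canonical length and the leading term (and between the true count of characteristic zeros and the second term), one recovers the third term, the cap $i\leq3$ covering the degenerate case in which all three regime bits vanish and deletion halts at the direction bit $D=1$.

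Minimality then follows from the uniqueness established above: any bit string of length less than $\ell$, once zero-extended, could represent $m$ only by being a truncation of the canonical string at a position preceding its last $1$-bit, contradicting the definition of $\ell$. I anticipate that the principal obstacle lies in this bookkeeping at the regime/characteristic boundary, where the fixed-width $3$-bit regime header simultaneously serves as truncatable payload and behaves differently under the $D=0$ and $D=1$ encodings: getting the sign-dependent switch between trailing-ones counts and $2$-adic valuations right, and thereby confirming that $\pm m$ yield the same $\ell$ despite using different $(D,r)$, together with verifying that the bound $|m|\leq2^{254}$ keeps $c$, $e$, and $r$ within their admissible ranges, is where the care is needed.
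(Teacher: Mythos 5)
Your core argument coincides with the paper's: identify the canonical bit string $(\textcolor{sign}{S},\textcolor{direction}{D},\textcolor{regime}{R},\textcolor{characteristic}{C},\textcolor{mantissa}{F})$ encoding $m$, observe that every representation of $m$ is this string padded with ghost zeros, and obtain $\ell$ by counting how many trailing zeros can be stripped — first from the fraction (none unless $m$ is a power of two), then from the characteristic (the $2$-adic valuation of $\integer(\textcolor{characteristic}{C})$, capped at $r$), then from the regime. Where you genuinely diverge is the treatment of the sign: the paper disposes of $m<0$ in one line by noting that the set of $n$-bit takums is closed under negation, so it only ever analyses $S=0$, $D=1$, $r=\lfloor\log_2 v\rfloor$. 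You instead work out the $S=1$ encodings explicitly — the shifted exponent $e=v-2$ for $|m|=2^{v-1}$, the window $2^r+1\le v\le 2^{r+1}$ for the regime, the complemented regime field under $D=0$, and the identities $\nu_2(2^r-d)=\nu_2(d)$ and ``trailing ones of $k-1$ equals $\nu_2(k)$'' needed to show the off-by-one shifts in the canonical length, the characteristic-zero count, and the regime-zero count all cancel. I checked this bookkeeping on several cases and it does close up, so your route is valid; it even buys something the paper takes on faith, namely an independent, constructive verification that $\pm m$ yield the same $\ell$, and your uniqueness-based minimality argument is somewhat more explicit than the paper's. The cost is that all of the risk in your proof sits exactly in the sign-dependent boundary cases you flag at the end, every one of which evaporates under the paper's single appeal to negation closure — if you retain your route, you should at least remark that the closure property would let you assume $m>0$ and delete two thirds of the case analysis.
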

\begin{proof}
	See Section~\ref{sec:proof-takum-integer}.
	\qed
\end{proof}
Compared to the posit result in Proposition~\ref{prop:posit-integer}, $m$ is limited in range and the expression for $\ell$ accounts for two special cases instead of one. This distinction arises because, in the case of posits, zero-bit truncation can occur only in the exponent bits. In contrast, for takums, truncation may occur in both the characteristic and regime bits, adding a layer of complexity to the analysis. Building on this result, we derive the following proposition.
\begin{proposition}[Consecutive Takum Integers]\label{prop:linear_takum-consecutive_integers}
	Let $n \in \mathbb{N}_5$. It holds for all $m \in \mathbb{Z}$ with $|m| \le 2^{254}$ and
	\begin{equation}\label{eq:takum-consecutive_bound}
		|m| \le
		2^{\left\lceil \frac{W_0\!\left(2^{n-3} \ln(2)\right)}{\ln(2)} - 1 \right\rceil},
	\end{equation}
	where $W_0$ is the principal branch of the \textsc{Lambert} $W$ function,
	that there exists an $M \in {\{0,1\}}^n$ with $m = \lineartakum(M)$.
\end{proposition}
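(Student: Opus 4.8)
To prove this, the plan is to route everything through Proposition~\ref{prop:takum-integer}, reducing the claim to one scalar inequality and then recognising the bound in \eqref{eq:takum-consecutive_bound} as exactly the crossover point of that inequality. First I would record the elementary \emph{padding} fact: appending a zero bit at the LSB end of a takum string does not change its value — it merely extends the fraction (doubling $\integer(\textcolor{mantissa}{F})$ and halving $2^{-p}$), while the regime and characteristic are untouched; and below the twelve-bit ghost-bit threshold the same holds, since the appended bit only turns a ghost zero into an explicit zero. Hence a length-$\ell$ representation of $m$ yields one of every length $\ge \ell$, so $m$ is representable as an $n$-bit takum if and only if the shortest length $\ell$ of Proposition~\ref{prop:takum-integer} satisfies $\ell \le n$. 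As $m = 0$ is trivial and $\ell$ depends on $m$ only through $v := 1 + \lfloor\log_2|m|\rfloor$ and the trailing-zero count $w$, and as raising $w$ or invoking either correction in \eqref{eq:prop-takum-integers-ell} can only shrink $\ell$, we get $\ell \le \lfloor 4 + v + \log_2 v\rfloor$; thus it suffices to control this smooth quantity, treating the single value $|m| = 2^{B}$ — where $B$ is the exponent on the right of \eqref{eq:takum-consecutive_bound}, for which $v = B+1$ but $w = v-1$ — separately.

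Next I would unwind the Lambert $W$ term. With $x := W_0(2^{n-3}\ln 2)$ and $y := x/\ln 2$, the identity $x\mathrm{e}^{x} = 2^{n-3}\ln 2$ together with $\mathrm{e}^{x} = 2^{y}$ collapses to the clean relation $y + \log_2 y = n - 3$; since $n \ge 5$ forces $y > 1$, we have $B = \lceil y - 1\rceil$ with $1 \le B < y$. For $1 \le |m| \le 2^{B} - 1$ this gives $v \le B$, and using strict monotonicity of $t \mapsto 4 + t + \log_2 t$ on $[1,\infty)$,
\begin{equation*}
	\ell \;\le\; \lfloor 4 + v + \log_2 v\rfloor \;\le\; \lfloor 4 + B + \log_2 B\rfloor \;<\; 4 + y + \log_2 y \;=\; n + 1,
\end{equation*}
so $\ell \le n$. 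For $|m| = 2^{B}$ the trailing zeros give $\ell \le \lfloor 5 + \log_2(B+1)\rfloor$, and $B + 1 \le y + 1 \le 2y$ together with $n = y + \log_2 y + 3$ shows $\ell \le n$ for $n \ge 8$; the cases $n \in \{5,6,7\}$ are verified directly from \eqref{eq:prop-takum-integers-ell}. Finally, when $2^{254} < 2^{B}$ (which forces $n$ large) I would instead use $v \le 255$ and $\ell \le \lfloor 259 + \log_2 255\rfloor = 266 \le n$.

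The main obstacle is the reduction's key observation — that the entire case-dependent apparatus of \eqref{eq:prop-takum-integers-ell}, namely the trailing-zero subtraction and both special-case corrections, can simply be dropped for an upper bound — because this is exactly what turns the intractable-looking length formula into $4 + v + \log_2 v$ and makes the comparison with $n$ feasible; the Lambert $W$ expression in \eqref{eq:takum-consecutive_bound} then reveals itself as nothing more than the largest integer $v$ with $4 + v + \log_2 v \le n$. Everything else — the padding fact, the $W_0$ manipulation, monotonicity, and the handful of small-$n$ and $m \in \{2^{B}, 2^{254}\}$ boundary checks — is routine bookkeeping. I would close by noting that the bound is tight whenever $B+1$ is not a power of two: then $2^{B}+1$ has length $\lfloor 5 + B + \log_2(B+1)\rfloor \ge n+1$ and is not representable, so $2^{B}$ is in fact the largest consecutive takum integer.
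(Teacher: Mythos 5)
Your proposal is correct and follows the same overall decomposition as the paper's proof: reduce everything to Proposition~\ref{prop:takum-integer}, discard the nonnegative correction terms of \eqref{eq:prop-takum-integers-ell} to obtain $\ell \le \lfloor 4 + v + \log_2 v\rfloor$ on the bulk of the range, treat $|m| = 2^{B}$ separately by exploiting its $v-1$ trailing zeros, and check $n \in \{5,6,7\}$ by hand. The genuine difference is how the Lambert $W$ term is disposed of. The paper solves $\lfloor 4+v+\log_2 v\rfloor \le n$ forward to arrive at the bound and then, for the $2^{B}$ case, imports the Hoorfar--Hassani inequality $W_0(x) \le \ln(x/\ln x) + \frac{\mathrm{e}}{\mathrm{e}-1}\frac{\ln(\ln x)}{\ln x}$ and spends considerable effort estimating $\lfloor 5 + \log_2(v+1)\rfloor$. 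You instead use the exact functional equation: with $y := W_0(2^{n-3}\ln 2)/\ln 2$ one has $y + \log_2 y = n-3$, so $B = \lceil y-1\rceil < y$ yields $\lfloor 4+B+\log_2 B\rfloor < n+1$ at once, and $B+1 \le 2y$ yields $\lfloor 5+\log_2(B+1)\rfloor \le \lfloor 6+\log_2 y\rfloor \le n$ once $y > 2$ (in fact already for $n \ge 7$, so your restriction to $n \ge 8$ is merely conservative). This removes the external inequality and most of the estimation work, giving a shorter and self-contained argument. Two further points in your favour: you state explicitly the zero-padding lemma (a length-$\ell$ representation extends to every length $\ge \ell$), which the paper uses only tacitly, and you handle the interaction with the cap $|m| \le 2^{254}$ for large $n$, which the paper's choice of $m = 2^{v}-1$ silently ignores. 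Your closing tightness remark is appropriately hedged and is in any case not required, since the proposition asserts only representability up to the bound.
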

\begin{proof}
	See Section~\ref{sec:proof-linear_takum-consecutive_integers}.
	\qed
\end{proof}
Despite the greater complexity of (\ref{eq:takum-consecutive_bound}) compared to the corresponding expressions for IEEE 754 in (\ref{eq:ieee_754-consecutive_bound}) and for posits in (\ref{eq:posit-consecutive_bound}), most notably due to the inclusion of a non-analytical function, the upper bound remains a power of two, consistent with the other two formats.
\section{Evaluation}\label{sec:evaluation}
\begin{figure}[tb]
	\begin{center}
		\begin{tikzpicture}
			\begin{axis}[
				scale only axis,
				xmode=log,
				width=0.8\textwidth,
				height=0.4944\textwidth,
				xtick={5,8,16,32,64,128},
				xticklabels={$5$, $8$, $16$, $32$, $64$, $128$},
				ytick={0,16.61,33.22,49.83,66.44,83.05,99.66,116.27},
				yticklabels={$0$,$5$,$10$,$15$,$20$,$25$,$30$,$35$},
				minor ytick={16.61,49.83,83.05,116.27},
				xlabel={$n$},
				ylabel={$\log_{10}(\text{largest consecutive integer})$},
				xlabel near ticks,
				grid=both,
				legend style={nodes={scale=0.7, transform shape}},
				legend style={at={(0.03,0.95)},anchor=north west},
				grid style={line width=.1pt, draw=gray!10},
			]
				\addplot [mantissa,only marks,mark 
				size=1.3pt,thick] table [col sep=comma] {
					16,11
					32,24
					64,53
					128,113
				};
				\addlegendentry{IEEE 754}

				\addplot [mantissa,only marks,mark=x,mark 
				size=3pt,thick] coordinates {(8,5)};
				\addlegendentry{OFP8 E4M3}

				\addplot [mantissa,only marks,mark=Mercedes 
				star,mark size=3pt,thick] coordinates {(8,3)};
				\addlegendentry{OFP8 E5M2}

				\addplot [direction,only marks,mark=star,mark 
				size=3pt,thick] coordinates {(16,8)};
				\addlegendentry{\texttt{bfloat16}}

				\addplot [characteristic,very thick] table [col 
				sep=comma,x=n,y=takum_exponent] {code/data.csv};
				\addlegendentry{Takum}
				\addplot [sign,thick] table [col sep=comma, 
				x=n, y=posit_exponent] {code/data.csv};
				\addlegendentry{Posit}
			\end{axis}
		\end{tikzpicture}
	\end{center}
	\caption{
		The largest consecutive integers for IEEE 754 for $n \in \{ 16,32,64,128\}$, OFP8, \texttt{bfloat16}, takums and posits
		relative to the bit string length $n$.
	}
	\label{fig:plot}
\end{figure}
\begin{table}[tb]
	\begin{center}
		\bgroup
		\def\arraystretch{1.3}
		\setlength{\tabcolsep}{0.3em}
		\begin{tabular}{| l | l | l |}
			\hline
			\textbf{type} & \textbf{largest consecutive integer} & \textbf{signed integer ratio/\%}\\\hline\hline
			OFP8 E4M3                & $2^{5\phantom{1}} = 32$ & $\num{25}$\\\hline
			OFP8 E5M2                & $2^{3\phantom{1}} = 8$ & $\num{6.3}$\\\hline
			\texttt{posit8}         & $2^{4\phantom{1}} = 16$ & $\num{13}$\\\hline
			\texttt{takum8} & $2^{3\phantom{1}} = 8$ & $\num{6.3}$\\\hline\hline
			\texttt{float16}         & $2^{11} = 2048$ & $\num{6.3}$\\\hline
			\texttt{bfloat16}        & $2^{8\phantom{1}} = 256$ & $\num{0.78}$\\\hline
			\texttt{posit16}         & $2^{10} = 1024$ & $\num{3.1}$\\\hline
			\texttt{takum16} & $2^{9\phantom{1}} = 512$ & $\num{1.6}$\\\hline\hline
			\texttt{float32}         & $2^{24} \approx \num{1.7e7}$ & $\num{0.78}$\\\hline
			\texttt{posit32}         & $2^{23} \approx \num{8.4e6}$ & $\num{0.39}$\\\hline
			\texttt{takum32} & $2^{24} \approx \num{1.7e7}$ & $\num{0.78}$\\\hline\hline
			\texttt{float64}         & $2^{53} \approx \num{9.0e15}$ & $\num{9.8e-2}$\\\hline
			\texttt{posit64}         & $2^{48} \approx \num{2.8e14}$ & $\num{3.1e-3}$\\\hline
			\texttt{takum64} & $2^{55} \approx \num{3.6e16}$ & $\num{0.39}$\\\hline\hline
			\texttt{float128}         & $2^{113} \approx \num{1.0e34}$ & $\num{6.1e-3}$\\\hline
			\texttt{posit128}         & $2^{100} \approx \num{1.3e30}$ & $\num{7.5e-7}$\\\hline
			\texttt{takum128} & $2^{118} \approx \num{3.3e35}$ & $\num{0.20}$\\\hline
		\end{tabular}
		\egroup
	\end{center}
	\caption{Largest consecutive integers for IEEE 754 for $n \in \{ 16,32,64,128\}$, OFP8, \texttt{bfloat16}, takum and posit
	formats. The signed integer ratio is the ratio of the largest consecutive integer in each floating-point format and the largest two's complement signed integer $2^{n-1}-1$ of the same bit length $n$.}
	\label{tab:table}
\end{table}
Although the proofs involve considerable complexity, the evaluation ultimately reduces to a comparison of the results, specifically, the values of the largest consecutive integers, presented in Propositions~\ref{prop:ieee_754-consecutive_integers}, \ref{prop:posit-consecutive_integers}, and \ref{prop:linear_takum-consecutive_integers}. These results are visualised for $n \in \{5,\dots,128\}$ in Figure~\ref{fig:plot} and tabulated for $n \in \{8,16,32,64,128\}$ in Table~\ref{tab:table}.
\par
At eight bits, the OFP8 E4M3 type performs best, closely followed by \texttt{posit8}, with \texttt{takum8} and E5M2 on par. At $16$ bits, \texttt{float16} achieves the best performance and \texttt{bfloat16} the worst, while \texttt{posit16} and \texttt{takum16} lie in between, the former slightly outperforming the latter. At $32$ bits, \texttt{posit32} lags behind both \texttt{takum32} and \texttt{float32}, which perform equivalently. Beyond $32$ bits, a clear trend emerges: takums significantly outperform both IEEE 754 floats and posits, with the latter falling considerably behind. For instance, at $64$ bits, \texttt{takum64} outperforms IEEE 754 by approximately half an order of magnitude and \texttt{posit64} by two orders of magnitude. This trend continues at $128$ bits, with performance gains of one and five orders of magnitude, respectively.  
\par  
The signed integer ratio reported in Table~\ref{tab:table} for each type demonstrates that the use of floating-point number formats for representing integral data is, overall, highly inefficient. Nonetheless, notable differences exist among the formats. While the OFP8 E4M3 type achieves \SI{25}{\percent} of the efficiency of an 8-bit signed integer, its counterpart E5M2, together with \texttt{takum8}, attains only \SI{6.3}{\percent}. At $16$ and $32$ bits, all formats remain within one order of magnitude of each other, indicating relatively balanced performance. From $64$ bits onwards, however, distinct differences emerge: IEEE 754 types and, to an even greater extent, posits exhibit a marked decline relative to the corresponding signed integer, whereas takums largely preserve their relative performance with respect to the corresponding signed integer.
\section{Conclusion}\label{sec:conclusion}
In this paper, we formally analysed the capacity of IEEE 754, OFP8, \texttt{bfloat16}, posit, and takum arithmetic to represent integral values, with particular emphasis on consecutive integer representations.
Given that \texttt{bfloat16} is a more practical and widely adopted reference for low-precision formats than \texttt{float16}, and that the two OFP8 types represent opposite extremes in performance, our results show that although posits slightly outperform takums at $8$ and $16$ bits, takums reach parity with posits at $32$ bits and increasingly surpass them at higher precisions, thereby demonstrating overall superiority across all practically relevant formats.
\par  
Takums match or exceed the performance of both \texttt{bfloat16}, the canonical reference, and IEEE 754 floating-point numbers at $32$, $64$, and $128$ bits, whereas posits only marginally outperform \texttt{bfloat16} before falling significantly behind all IEEE 754 floating-point types. This demonstrates that takums are fully backwards-compatible with IEEE 754, in contrast to posits. When compared with signed integers through the signed integer ratio metric, all formats except takums exhibit a sharp and increasing decline in performance beyond $32$ bits, whereas takums maintain a balanced performance.
\par  
The formal methodology adopted in this study not only corroborates empirical findings for posits reported in \cite{posits-standard-2022}, but also extends the analysis to takum arithmetic, for which the complexity of the consecutive integer bounds would render a purely empirical approach infeasible. Furthermore, we provide a formal proof for IEEE 754 floating-point numbers which, to the best of the authors’ knowledge, has not previously appeared in the literature.
\par  
It might be argued that it constitutes a disadvantage that the proofs for posits and takums are considerably more complex than those for IEEE 754 floats. However, in the author's view, what ultimately matters is the numerical and practical performance of a number format. By analogy, few would argue for the superiority of the \textsc{Euler} method over the \textsc{Runge}--\textsc{Kutta}--\textsc{Fehlberg} method.  
\par 
In conclusion, the OFP8 types have been shown to yield mixed results, \texttt{bfloat16} generally performs poorly, and takum arithmetic demonstrates balanced performance while consistently matching or surpassing IEEE 754 floating-point numbers in terms of consecutive integer representation. This marks a clear distinction from posit arithmetic, which falls short in this regard. The finding is particularly significant given the extensive body of work founded on IEEE 754, much of which implicitly relies on its integral representation capabilities. By preserving these established properties while offering greater flexibility, takum arithmetic not only positions itself as a compelling alternative, but also as a promising successor to IEEE 754 for future numerical frameworks.
\section{Proof of Proposition~\ref{prop:posit-integer}}\label{sec:proof-posit-integer}
Without loss of generality we can assume $m \in \mathbb{N}_1$, because the
set of $n$-bit posits is closed under negation. We know that $m$ has the form
\begin{equation}\label{eq:proof-posit_cost-m_representation}
	m = 2^{v-1} + \sum_{i=w}^{v-2} {\textcolor{p-fraction}{F}\!}_{i-w} 2^i
\end{equation}
with $\textcolor{p-fraction}{F} \in {\{ 0,1 \}}^{v-w-1}$ and
${\textcolor{p-fraction}{F}\!}_0 \neq 0$. Its corresponding floating-point
representation follows directly as
\begin{equation}\label{eq:proof-posit_cost-floating-point_representation}
	m = (1 + f) \cdot 2^{v-1}
\end{equation}
with $f := \sum_{i=w}^{v-2} {\textcolor{p-fraction}{F}\!}_{i-w} 2^{i-v+1} \in [0,1)$.
For (\ref{eq:proof-posit_cost-floating-point_representation}) to correspond
with (\ref{eq:posit-value}) it must hold
\begin{equation}
	v - 1 = \hat{e} = 4 (k-1) + e \Leftrightarrow
	k = \frac{v+3-e}{4}
\end{equation}
with $k \in \mathbb{N}_1$ and $e \in \{0,\dots,3\}$. We set
$e = (v+3) \bmod 4$ and obtain $k = \left\lfloor \frac{v+3}{4} \right\rfloor$.
The posit bit representation of $m$ follows as
$M := (\textcolor{p-sign}{0}, \textcolor{p-regime}{\bm{1}_k}, \textcolor{p-regime-term}{0}, \textcolor{p-exponent}{E}, \textcolor{p-fraction}{F})$
with non-reduced length
\begin{equation}
	1 + \left\lfloor \frac{v+3}{4} \right\rfloor + 1 + 2 + (v - w - 1) =
	\left\lfloor \frac{5 (v+3)}{4} - w \right\rfloor.
\end{equation}
We know that $M$ is the shortest possible representation when
$\textcolor{p-fraction}{F}$ has non-zero length, because ${\textcolor{p-fraction}{F}\!}_0 \neq 0$.
Otherwise it holds $v-w-1 = 0 \Leftrightarrow w = v -1$ and with
(\ref{eq:proof-posit_cost-m_representation}) it follows
$m = 2^{v-1}$. We now check each possible value of
$\textcolor{p-exponent}{E}$ (corresponding to $e$) to assess
the reducibility of $M$.
\begin{description}
	\item[Case 1 ($e=0$)]{
		~\\This implies $\textcolor{p-exponent}{E}=\bm{0}_2$, and
		given $\textcolor{p-regime-term}{\overline{R_0}} = 0$ it
		follows that $M$ can be reduced by 3 bits. With
		$e = (v+3) \bmod 4$ this case is equivalent to
		$v \in 4 \mathbb{N}_0 + 1$.
	}
	\item[Case 2 ($e=1$)]{
		~\\This implies $\textcolor{p-exponent}{E}=(0,1)$, which
		means that $M$ cannot be further reduced.
	}
	\item[Case 3 ($e=2$)]{
		~\\This implies $\textcolor{p-exponent}{E}=(1,0)$, which
		means that $M$ can be reduced by 1 bit. With
		$e = (v+3) \bmod 4$ this case is equivalent to
		$v \in 4 \mathbb{N}_0 + 3$.
	}
	\item[Case 4 ($e=3$)]{
		~\\This implies $\textcolor{p-exponent}{E}=(1,1)$, which
		means that $M$ cannot be further reduced.
	}
\end{description}
No further reduction is possible, as the direction bit $\textcolor{direction}{D}$
is always $1$. The reduced length of $M$ follows as
\begin{equation}
	{\left\lfloor \frac{5 (v+3)}{4} - w \right\rfloor} -
	(w= v-1) \cdot \begin{cases}
		3 & v \in 4\mathbb{N}_0+1\\
		1 & v \in 4\mathbb{N}_0+3,
	\end{cases}
\end{equation}
which was to be shown.\qed
\section{Proof of Proposition~\ref{prop:posit-consecutive_integers}}\label{sec:proof-posit-consecutive_integers}
Without loss of generality we can assume $m \in \mathbb{N}_1$, because the set of
$n$-bit posits is closed under negation and the integer zero is represented by
$\bm{0}_n$ for any $n$. Let us further assume that $m$ is an arbitrary $v$-bit integer
with $w$ trailing zeros in its binary representation, where $v \in \mathbb{N}_1$ and
$w \in \mathbb{N}_0$.
Our goal is to determine an upper bound for $v$ that depends on $n$ to find out the
largest consecutive integer represented by an $n$-bit posit.
\par
With Proposition~\ref{prop:posit-integer} we know there exists an $M \in {\{0,1\}}^\ell$ with $\pi(M)=m$ and $\ell$ as in (\ref{eq:prop-posit-integers-ell}). It holds
with $v \in \mathbb{N}_1$ (operation 1) that
\begin{align}
	\ell \le n
	&\Leftrightarrow {\left\lfloor \frac{5 (v+3)}{4} - w \right\rfloor} -\notag\\
	&\hphantom{\Leftrightarrow}\quad (w= v-1) \cdot
		\begin{cases}
			3 & v \in 4\mathbb{N}+1\\
			1 & v \in 4\mathbb{N}+3
		\end{cases} \le n\label{eq:proof-posit-consecutive_integers-before}\\
	&\Leftarrow {\left\lfloor \frac{5 (v+3)}{4} \right\rfloor} \le n\label{eq:proof-posit-consecutive_integers-after}\\
	&\Leftrightarrow \frac{5 (v+3)}{4} < n+1\\
	&\Leftrightarrow v < \frac{4(n+1)}{5} - 3\\
	&\Leftrightarrow v < \frac{4n-11}{5}\\
	&\overset{1}{\Leftrightarrow} v \le \left\lceil \frac{4n-11}{5} - 1 \right\rceil\\
	&\Leftrightarrow v \le \left\lceil \frac{4n-16}{5} \right\rceil\\
	&\Leftrightarrow v \le \left\lfloor \frac{4n-16 + (5-1)}{5} \right\rfloor\\
	&\Leftrightarrow v \le \left\lfloor \frac{4(n-3)}{5} \right\rfloor.
\end{align}
Thus we have found an upper bound on $v$ such that arbitrary $v$-bit integers
are represented by an $n$-bit posit. Given the step from (\ref{eq:proof-posit-consecutive_integers-before}) to (\ref{eq:proof-posit-consecutive_integers-after}) this upper bound is not
tight and only a starting point for the next stage of the proof.
\par
Let $v= \left\lfloor \frac{4(n-3)}{5} \right\rfloor$ and $m=2^v-1$,
namely a $v$-bit saturated integer and thus the largest consecutive integer
value representable with an $n$-bit posit that we know of.
Now let us take a look at $m+1 = 2^v$, a $v+1$-bit integer with $v$ trailing
zeros in its binary representation. With Proposition~\ref{prop:posit-integer}
we know that there exists $M'\in {\{0,1\}}^{\ell'}$ with $\pi(M')=m+1$
and
{\allowdisplaybreaks
\begin{align}
	\ell' &:= {\left\lfloor \frac{5 ((v+1)+3)}{4} - v \right\rfloor} -\\
	&\hphantom{=}\quad (v = (v+1)-1) \cdot \begin{cases}
			3 & v+1 \in 4\mathbb{N}_0+1\\
			1 & v+1 \in 4\mathbb{N}_0+3,
		\end{cases}\\
	&\le \left\lfloor \frac{v}{4} + 5 \right\rfloor -
		\begin{cases}
			3 & v \in 4\mathbb{N}_0\\
			1 & v \in 4\mathbb{N}_0+2,
		\end{cases}\\
	&= \left\lfloor \frac{n-3}{5} + 5 \right\rfloor -
		\begin{cases}
			3 & v \in 4\mathbb{N}_0\\
			1 & v \in 4\mathbb{N}_0+2,
		\end{cases}\\
	&\begin{cases}
			=2 & n = 4\\
			=5 & n = 5\\
			< \frac{n-3}{5} + 5 & n \ge 6
		\end{cases}\\
	&\le n.
\end{align}
}
As we can see $m+1$ has a posit representation that fits within $n$
bits, which means that it is also part of the set of representable
consecutive integers.
Let us now take a look at $m+2 = 2^v+1$, a $(v+1)$-bit integer with
zero trailing zeros in its binary representation ($w=0$). With
Proposition~\ref{prop:posit-integer} we know that there exists
$M'' \in {\{0,1\}}^{\ell''}$ with $\pi(M'')=m+2$ and, with
$\mathbb{N}_1 \ni v \neq 0$ (operation 1),
\begin{align}
	\ell'' &:= {\left\lfloor \frac{5 ((v+1)+3)}{4} \right\rfloor} -\\
	&\hphantom{=}\quad (0 = (v+1)-1) \cdot \begin{cases}
			3 & v+1 \in 4\mathbb{N}_0+1\\
			1 & v+1 \in 4\mathbb{N}_0+3,
		\end{cases}\\
	&\overset{1}{=} \left\lfloor \frac{5v}{4}+5 \right\rfloor\\
	&= \left\lfloor n-3+5 \right\rfloor\\
	&= n+2\\
	&> n.
\end{align}
Here we can see that m+2's posit representation does not
fit within $n$ bits. Thus $m+1=2^v$ is the largest representable
consecutive integer, as was to be shown.\qed
\section{Proof of Proposition~\ref{prop:takum-integer}}\label{sec:proof-takum-integer}
Without loss of generality we can assume $m \in \mathbb{N}_1$, because the
set of $n$-bit takums is closed under negation. By construction $m$ is in the image
of $\lineartakum$. We know that $m$ has the form
\begin{equation}\label{eq:proof-takum_cost-m_representation}
	m = 2^{v-1} + \sum_{i=w}^{v-2} {\textcolor{mantissa}{F}\!}_{i-w} 2^i
\end{equation}
with $\textcolor{mantissa}{F} \in {\{ 0,1 \}}^{v-w-1}$ and
${\textcolor{mantissa}{F}\!}_0 \neq 0$. Its corresponding floating-point
representation follows directly as
\begin{equation}\label{eq:proof-takum_cost-floating-point_representation}
	m = (1 + f) \cdot 2^{v-1}
\end{equation}
with $f := \sum_{i=w}^{v-2} {\textcolor{mantissa}{F}\!}_{i-w} 2^{i-v+1} \in [0,1)$.
For (\ref{eq:proof-takum_cost-floating-point_representation}) to correspond
with (\ref{eq:linear_takum-value}) it must hold $\textcolor{sign}{S} = 0$ and
\begin{equation}
	v - 1 = e = {(-1)}^{\textcolor{sign}{S}}(c + \textcolor{sign}{S}) = c \Leftrightarrow
	v = c + 1.
\end{equation}
As $v \ge 1$ it follows $c \ge 0$ and thus $\textcolor{direction}{D} = 1$. This
yields with (\ref{eq:linear_takum-characteristic}) that
\begin{equation}\label{eq:proof-takum_cost-c}
	v = c + 1 = 2^r - 1 + \integer(\textcolor{characteristic}{C}) + 1
	= 2^r + \integer(\textcolor{characteristic}{C}),
\end{equation}
and we can deduce
\begin{equation}\label{eq:proof-takum_cost-regime_value}
	r = \lfloor \log_2(v) \rfloor.
\end{equation}
The takum bit representation of $m$ follows as
$M := (\textcolor{sign}{0},\textcolor{direction}{1},\textcolor{regime}{R},
\textcolor{characteristic}{C},\textcolor{mantissa}{F})$ with non-reduced
length
\begin{equation}
	1 + 1 + 3 + \lfloor \log_2(v) \rfloor + (v - w - 1) =
	\left\lfloor 4 + v + \log_2(v) - w \right\rfloor
\end{equation}
We know that $M$ is the shortest possible representation when
$\textcolor{p-fraction}{F}$ has non-zero length, because ${\textcolor{p-fraction}{F}\!}_0 \neq 0$.
Otherwise it holds $v-w-1 = 0 \Leftrightarrow w = v -1$ and with
(\ref{eq:proof-takum_cost-m_representation}) it follows
$m = 2^{v-1}$. In this case we must check if the characteristic
bits $\textcolor{characteristic}{C}$ have trailing zeros that
can be reduced. With (\ref{eq:proof-takum_cost-c}) and
(\ref{eq:proof-takum_cost-regime_value}) we know that
\begin{equation}\label{eq:proof-takum_cost-characteristic}
	\integer(\textcolor{characteristic}{C}) = v - 2^r =
	v - 2^{\lfloor \log_2(v) \rfloor}.
\end{equation}
The number of trailing zeros in the $r = \lfloor \log_2(v) \rfloor $ characteristic bits
is obtained with
\begin{equation}
	\max_{i \in \{ 0,\dots,r \}}
	\left( 2^i \mid v \! - \! 2^{r} \right) =
	\max_{i \in \{ 0,\dots,\lfloor \log_2(v) \rfloor \}}
	\left( 2^i \mid v \!-\! 2^{\lfloor \log_2(v) \rfloor} \right),
\end{equation}
namely the largest integer $i$ such that $2^i$ divides
$\integer(\textcolor{characteristic}{C})$.
In the extreme case that all characteristic bits are zero,
implying $\integer(\textcolor{characteristic}{C}) = 0$,
we might also be able to reduce trailing zeros in the regime. Using
(\ref{eq:proof-takum_cost-characteristic}) this case
is equivalent to
\begin{equation}
	0 =
	v - 2^{\lfloor \log_2(v) \rfloor} \Leftrightarrow
	v = 2^{\lfloor \log_2(v) \rfloor} \Leftrightarrow
	v \in 2^{\mathbb{N}_0}.
\end{equation}
The number of trailing zero bits in the 3 regime bits
is, analogous to the trailing zero bits in the characteristic bits,
the largest integer $i \in \{ 0,\dots,3 \}$ such that $2^i$ divides
$r$, formally
\begin{equation}
	\max_{i \in \{ 0,\dots,3 \}} \left( 2^i \mid r \right) =
	\max_{i \in \{ 0,\dots,3 \}} \left( 2^i \mid \lfloor \log_2(v) \rfloor \right).
\end{equation}
No further reduction is possible, as the direction bit $\textcolor{direction}{D}$
is always $1$. The reduced length of $M$ follows as
\begin{multline}
		{\left\lfloor 4+v+\log_2(v) - w \right\rfloor} -\\
			(w = v-1) \cdot \max_{i \in \{ 0,\dots, \lfloor \log_2(v) \rfloor\}}
				\left( 2^i \mid v - 2^{\lfloor \log_2(v) \rfloor} \right) -\\
			(v \in 2^{\mathbb{N}_0}) \cdot \max_{i \in \{0,\dots,3\}}
				\left( 2^i \mid \lfloor \log_2(v) \rfloor \right),
\end{multline}
which was to be shown.\qed
\section{Proof of Proposition~\ref{prop:linear_takum-consecutive_integers}}\label{sec:proof-linear_takum-consecutive_integers}
Without loss of generality we can assume $m \in \mathbb{N}_1$, because the set of
$n$-bit takums is closed under negation and the integer zero is represented by
$\bm{0}_n$ for any $n$. Let us further assume that $m$ is an arbitrary $v$-bit integer
with $w$ trailing zeros in its binary representation, where $v \in \mathbb{N}_1$ and
$w \in \mathbb{N}_0$.
Our goal is to determine an upper bound for $v$ that depends on $n$ to find out the
largest consecutive integer represented by an $n$-bit takum.
\par
With Proposition~\ref{prop:takum-integer} we know there exists an $M \in {\{0,1\}}^\ell$ with $\pi(M)=m$ and $\ell$ as in (\ref{eq:prop-takum-integers-ell}). It holds that
\begin{align}
	\ell \le n
	&\Leftrightarrow {\left\lfloor 4+v+\log_2(v) - w \right\rfloor} -\notag\\
	&\hphantom{\Leftrightarrow}\quad (w \!= \!v\!-\!1) \cdot \max_{i \in \{ 0,\dots,\lfloor \log_2(q) \rfloor \}}
		\left( 2^i \mid v - 2^{\lfloor \log_2(q) \rfloor} \right) -\notag\\
	&\hphantom{\Leftrightarrow}\quad (v \in 2^{\mathbb{N}_0}) \cdot \max_{i \in \{0,\dots,3\}}
		\left( 2^i \mid \lfloor \log_2(v) \rfloor \right) \le n\label{eq:proof-takum-consecutive_integers-before}\\
	&\Leftarrow {\left\lfloor 4+q+\log_2(v) \right\rfloor} \le n\label{eq:proof-takum-consecutive_integers-after}\\
	&\Leftrightarrow 4+v+\log_2(v) < n + 1\\
	&\Leftrightarrow v+\log_2(v) < n - 3\\
	&\Leftrightarrow 2^{v+\log_2(v)} < 2^{n-3}\\
	&\Leftrightarrow v \cdot 2^v < 2^{n-3}\\
	&\Leftrightarrow \frac{\ln(2)}{\ln(2)} v \cdot \exp\!\left({\ln(2) v}\right) < 2^{n-3}\\
	&\Leftrightarrow \left( \ln(2)v \right) \exp\!\left({\ln(2) v}\right) < 2^{n-3} \ln(2).
\end{align}
If we set $\tilde{v} := \ln(2)v$ we obtain
\begin{equation}
	\ell \le n \Leftarrow \tilde{v} \cdot \exp(\tilde{v}) < 2^{n-3} \ln(2).
\end{equation}
We know that $v$ can be expressed using the \textsc{Lambert} $W$ function. Given
$2^{n-3} \ln(2) > 0$ we only need to consider its principal branch $W_0$. As $W_0$
is monotonically increasing it holds with $v \in \mathbb{N}_1$ (operation 1)
\begin{align}
	\tilde{v} \!\cdot\! \exp(\tilde{v}) \!<\! 2^{n-3} \ln(2) &\Leftrightarrow
	\tilde{v} < W_0\!\left(2^{n-3} \ln(2)\right)\\
	&\Leftrightarrow v < \frac{W_0\!\left(2^{n-3} \ln(2)\right)}{\ln(2)}\\
	&\overset{1}{\Leftrightarrow} v \le \left\lceil \frac{W_0\!\left(2^{n-3} \ln(2)\right)}{\ln(2)}
		\!-\! 1 \right\rceil.
\end{align}
Thus we have found an upper bound on $v$ such that arbitrary $v$-bit integers
are represented by an $n$-bit takum. Given the step from (\ref{eq:proof-takum-consecutive_integers-before}) to (\ref{eq:proof-takum-consecutive_integers-after}) this upper bound is not
tight and only a starting point for the next stage of the proof.
\par
Let $v= \left\lceil \frac{W_0\!\left(2^{n-3} \ln(2)\right)}{\ln(2)}
- 1 \right\rceil$ and $m=2^v-1$,
namely a $v$-bit saturated integer and thus the largest consecutive integer
value representable with an $n$-bit takum that we know of.
Now let us take a look at $m+1 = 2^v$, a $(v+1)$-bit integer with $v$ trailing
zeros in its binary representation. With Proposition~\ref{prop:takum-integer}
we know that there exists $M'\in {\{0,1\}}^{\ell'}$ with $\tau(M')=m+1$
and
\begin{align}
	\ell' &:= {\left\lfloor 4+(v\!+\!1)+\log_2(v\!+\!1) \!-\! v \right\rfloor} -
		\notag\\
	&\hphantom{\Leftrightarrow}\quad (v \!=\! (v\!+\!1)\!-\!1) \cdot\notag\\
	&\hphantom{\Leftrightarrow}\quad\max_{i \in
		\{ 0,\dots,\lfloor \log_2(v+1) \rfloor \}}
			\!\left( 2^i \mid (v+1) \!-\! 2^{\lfloor \log_2(v+1) \rfloor} \right) -\notag\\
	&\hphantom{\Leftrightarrow}\quad (v+1 \in 2^{\mathbb{N}_0}) \cdot \max_{i \in \{0,\dots,3\}}
			\left( 2^i \mid \lfloor \log_2(v+1) \rfloor \right)\\
	&\le {\left\lfloor 5 + \log_2(v + 1) \right\rfloor} -\notag\\
	&\hphantom{=}\,\,\,\,
		(v+1 \in 2^{\mathbb{N}_0}) \cdot \max_{i \in \{0,\dots,3\}}
		\left( 2^i \mid \lfloor \log_2(v+1) \rfloor \right)\\
	&= {\left\lfloor 5+\log_2\!\left(
		\left\lceil \frac{W_0\!\left(2^{n-3} \ln(2)\right)}{\ln(2)}
		- 1 \right\rceil+1
	\right) \right\rfloor} -\notag\\
	&\hphantom{=}\,\,\,\,
			(v+1 \in 2^{\mathbb{N}_0}) \cdot \max_{i \in \{0,\dots,3\}}
			\left( 2^i \mid \lfloor \log_2(v+1) \rfloor \right)\\
	&= {\left\lfloor 5+\log_2\!\left(
		\left\lceil \frac{W_0\!\left(2^{n-3} \ln(2)\right)}{\ln(2)}
		\right\rceil
	\right) \right\rfloor} -\notag\\
	&\hphantom{=}\,\,\,\,
			(v+1 \in 2^{\mathbb{N}_0}) \cdot \max_{i \in \{0,\dots,3\}}
			\left( 2^i \mid \lfloor \log_2(v+1) \rfloor \right)\\
	&\le {\left\lfloor 5+\log_2\!\left(
			1 + \frac{W_0\!\left(2^{n-3} \ln(2)\right)}{\ln(2)}
		\right) \right\rfloor} -\notag\\
	&\hphantom{=}\,\,\,\,
		(v+1 \in 2^{\mathbb{N}_0}) \cdot \max_{i \in \{0,\dots,3\}}
		\left( 2^i \mid \lfloor \log_2(v+1) \rfloor \right)\label{eq:proof-takum-consecutive_integers-ell_midresult}.
\end{align}
Using a result from Hoorfar et al. from \cite[Theorem~2.7]{lambert-inequality} that holds for $x > \mathrm{e}$ (operation 1)
we obtain the bound
\begin{align}
	W_0(x) &\overset{1}{\le} \ln\!\left( \frac{x}{\ln(x)} \right) + \frac{\mathrm{e}}{\mathrm{e} - 1}
		\frac{\ln(\ln(x))}{\ln(x)}\\
	&= \ln\!\left( \frac{x}{\ln(x)} \right) + \frac{\mathrm{e}}{(\mathrm{e} - 1) \ln(x)}
		\ln(\ln(x))\\
	&= \ln\!\left( \frac{x}{\ln(x)} \right) + \ln\!\left(
		{(\ln(x))}^{\frac{\mathrm{e}}{(\mathrm{e} - 1) \ln(x)}}
		\right)\\
	&= \ln\!\left(
			x \cdot {(\ln(x))}^{\frac{\mathrm{e}}{(\mathrm{e} - 1) \ln(x)} - 1}
		\right).
\end{align}
In our case $x := 2^{n-3} \ln(2) > \mathrm{e}$ holds for $n \in \mathbb{N}_5$, as
$2^2 \ln(2) \approx 2.77 > \mathrm{e}$. This means that we can use the bound (operation 1).
It also holds $\frac{\mathrm{e}}{(\mathrm{e} - 1) \ln(x)}-1 \in (0,1)$ (operation 2),
given $\ln(x) > 1$ is never negative and the expression is smaller than one
if and only if $\ln(x) > \frac{\mathrm{e}}{2(\mathrm{e}-1)} \approx 0.79$, which is the case
as for $n=5$ we know $\ln(x) = (5-3) \ln(2) + \ln(\ln(2)) \approx 1.02 > 0.79$ and
$\ln(x)$ is strictly monotonically increasing in $n$. It follows with
$\ln(\ln(2)) < 0$ (operation 3) that
\begin{align}
	W_0\!\left(2^{n-3} \ln(2)\right) &\overset{1}{\le}
		\ln\!\left(
			x \cdot {(\ln(x))}^{\frac{\mathrm{e}}{(\mathrm{e} - 1) \ln(x)} - 1}
		\right)\\
	&\overset{2}{<} \ln\!\left(
			x \cdot \ln\!\left(x\right)
		\right)\\
	&= \ln\!\left(
			2^{n-3} \ln(2) \cdot\right.\notag\\
	&\hphantom{=}\,\, \left. \left[ (n-3) \ln(2) + \ln(\ln(2)) \right] \right)\\
	&\overset{3}{<} \ln\!\left(
				2^{n-3} \cdot \ln(2) \cdot (n-3) \cdot \ln(2) \right)\\
	&= (n-3) \ln(2) + \ln(n-3) +\notag\\
	&\hphantom{=}\,\,\, 2\ln(\ln(2)).\label{eq:proof-takum-consecutive_integers-W0_bound}
\end{align}
We insert (\ref{eq:proof-takum-consecutive_integers-W0_bound}) into
(\ref{eq:proof-takum-consecutive_integers-ell_midresult}) and obtain
{\allowdisplaybreaks
\begin{align}
	\ell' &\le {\left\lfloor 5\!+\!\log_2\!\left(
		1 \!+\! \frac{ (n\!-\!3)\ln(2) \!+\! \ln(n\!-\!3) \!+\!
			2\ln(\ln(2)) }{\ln(2)}
		\right) \right\rfloor} -\notag\\
	&\hphantom{=}\,\,\,\,
		(v+1 \in 2^{\mathbb{N}_0}) \cdot \max_{i \in \{0,\dots,3\}}
		\left( 2^i \mid \lfloor \log_2(v+1) \rfloor \right)\\
	&= {\left\lfloor 5+\log_2\!\left(
		n - 2 + \frac{ \ln(n-3) + 2\ln(\ln(2)) }{\ln(2)}
		\right) \right\rfloor}-\notag\\
	&\hphantom{=}\,\,\,\,
		(v+1 \in 2^{\mathbb{N}_0}) \cdot \max_{i \in \{0,\dots,3\}}
		\left( 2^i \mid \lfloor \log_2(v+1) \rfloor \right)\label{eq:proof-takum-consecutive_integers-last_bound}\\
	&\begin{cases}
		= 6 - 1 = 5 & n = 5\\
		= 7 - 1 = 6 & n = 6\\
		= 7 & n = 7\\
		< n & n \ge 8
	\end{cases}\\
	&\le n,
\end{align}
}
where the last case is handled by noting that
(\ref{eq:proof-takum-consecutive_integers-last_bound}) is at most
$7$ for $n=8$ and the first summand grows in the order of
$\mathcal{O}(\ln(n))$.
As we can see $m+1$ has a takum representation that fits within $n$
bits, which means that it is also part of the set of representable
consecutive integers.
\par
Let us now take a look at $m+2 = 2^v+1$, a $v+1$-bit integer with
zero trailing zeros in its binary representation ($w=0$). With
Proposition~\ref{prop:takum-integer} we know that there exists
$M'' \in {\{0,1\}}^{\ell''}$ with $\tau(M'')=m+2$ and, with
$\mathbb{N}_1 \ni v \neq 0$ (operation 1),
\begin{align}
	\ell'' &:= {\left\lfloor 4+(v\!+\!1)+\log_2(v+1) \right\rfloor} -
			\notag\\
	&\hphantom{\Leftrightarrow}\quad (0 = (v+1)-1) \cdot\notag\\
	&\hphantom{\Leftrightarrow}\quad \max_{i \in \{ 0,\dots,\lfloor \log_2(v+1) \rfloor \}}
				\!\left( 2^i \mid (v+1) \!-\! 2^{\lfloor \log_2(v+1) \rfloor} \right) -\notag\\
	&\hphantom{\Leftrightarrow}\quad (v+1 \in 2^{\mathbb{N}_0}) \cdot \max_{i \in \{0,\dots,3\}}
				\left( 2^i \mid \lfloor \log_2(v+1) \rfloor \right)\\
	&\overset{1}{=}
		{\left\lfloor 4+(v+1)+\log_2(v+1) \right\rfloor}\\
	&= {\left\lfloor 5+v+\log_2(v+1) \right\rfloor}\\
	&> 4 + v + \log_2(v+1)\\
	&= 4 + \left\lceil \frac{W_0\!\left(2^{n-3} \ln(2)\right)}{\ln(2)}
	- 1 \right\rceil + \notag\\
	&\hphantom{=}\,\,\log_2\!\left(
			\left\lceil \frac{W_0\!\left(2^{n-3} \ln(2)\right)}{\ln(2)}
			\right\rceil
		\right)\\
	&\ge 3 + \frac{W_0\!\left(2^{n-3} \ln(2)\right)}{\ln(2)} +
		\log_2\!\left(
			\frac{W_0\!\left(2^{n-3} \ln(2)\right)}{\ln(2)}
		\right)\\
	&= 3 + \frac{\ln(2^{n-3})}{\ln(2)}\\
	&= 3 + n - 3\\
	&= n.
\end{align}
Here we can see that $m+2$'s posit representation does not
fit within $n$ bits. Thus $m+1=2^v$ is the largest representable
consecutive integer, as was to be shown.\qed
\printbibliography
\end{document}